\documentclass[conference]{IEEEtran}
\IEEEoverridecommandlockouts
\usepackage[T1]{fontenc}
\usepackage{cite}
\usepackage{amsmath,amssymb,amsfonts,amsthm}
\usepackage{url}

\usepackage{algorithm}
\usepackage{algpseudocode}
\usepackage{graphicx}
\usepackage{textcomp}
\usepackage{xcolor}
\usepackage{booktabs}
\usepackage{microtype}
\usepackage{tikz}
\usetikzlibrary{positioning,arrows.meta}
\usepackage{pgfplots}
\pgfplotsset{compat=1.18}

\newtheorem{theorem}{Theorem}
\newtheorem{lemma}{Lemma}
\newtheorem{remark}{Remark}

\newcommand{\sys}{FP-Sketch}

\usepackage{etoolbox}
\makeatletter
\patchcmd{\abstract}{---}{.\ }{}{}
\patchcmd{\IEEEkeywords}{---}{.\ }{}{}
\makeatother

\definecolor{seriesA}{HTML}{2A78D6}
\definecolor{seriesB}{HTML}{EB6834}
\definecolor{inkMuted}{HTML}{898781}
\definecolor{gridLine}{HTML}{E1E0D9}

\begin{document}

\title{Sketching the Error, Not the Product: Post Hoc Fault Recovery for Half Precision GPU Matrix Multiplication}

\author{
\IEEEauthorblockN{Pranav Napolean}
\IEEEauthorblockA{\textit{Department of Computer Science and Engineering} \\
\textit{National Institute of Technology Warangal}\\
Warangal, India \\
napoleanpranav@gmail.com}
\and
\IEEEauthorblockN{Vikas Srivastava}
\IEEEauthorblockA{\textit{Department of Mathematics} \\
\textit{National Institute of Technology Warangal}\\
Warangal, India \\
vsv@nitw.ac.in}
\and
\IEEEauthorblockN{Napolean Periathambi}
\IEEEauthorblockA{\textit{Executive Director} \\
\textit{AthenaHealth}\\
ppnapolean@gmail.com}
}

\maketitle

\begin{abstract}
Silent data corruption (SDC) from defective accelerators now interrupts large
scale training, yet deployed mitigations act on whole nodes. Algorithm based
fault tolerance (ABFT) for a single GEMM has to be fused into the kernel or
encode the operands, and it localizes at most one error per checksum. We
present \sys{}, a verifier that runs after an unmodified tensor core GEMM whose
half precision operands are accumulated and delivered at FP32. A sum sketch
detects corruption on every call. Hashed first moment sketches, confirmed by
independent recomputation, then localize several corrupted entries with no
false positives by construction, and each fault yields a coordinate and a
magnitude for fleet diagnosis. In floating point, sketch noise rather than
bucket collisions limits localization. We measure that noise and find that its
constant depends on the BLAS and the operand format and that the bucket count
must grow as $n^{2.57}$ for a square product. Sizing the bucket count by
measured noise rather than by a fitted power of $n$ raises recovery on eight
transformer shapes from $0.402$ to $1.000$, and measuring the noise at run time
adapts the bucket count to the kernel and the model.
Instruction level injection with NVBit shows that upsets in a live accumulator
are often only $2$ to $9\%$ of a typical entry, a population that output side
injection cannot produce. Output side injection recovers every fault, while
under NVBit the same engine sized for faults of typical magnitude recovers
$0.550$, and sizing for the measured magnitudes restores $1.000$. On
Llama-2-7B, guarding the MLP down projections removes $99.4\%$ (BF16) and
$99.9\%$ (FP16) of the perplexity damage caused by $2048$ bit flips, and the
clean path probe costs $0.78$ to $3.06$\,ms against GEMMs of $0.35$ to
$12.47$\,ms.
\end{abstract}

\begin{IEEEkeywords}
Silent data corruption, algorithm based fault tolerance, matrix
multiplication, mixed precision, fault injection, sketching
\end{IEEEkeywords}

\section{Introduction}

Silent data corruption is a compute fault that returns a valid but wrong value
with no exception, NaN or crash. It has become a first order reliability
problem for large scale AI~\cite{ocp_sdc_ai_2025,sdc_survey_2026}. Training
multiplies the exposure. A single run occupies tens of thousands of
accelerators for weeks of synchronous work, and since every step builds on the
state the previous one produced, one wrong value is carried into every later
step~\cite{sdc_llm_2025,sdc_llm_train_2026}. Reported incidents understate the
damage. Instruction level injection into the matrix multiply instructions of
BF16 LLaMA training shows exponent bit upsets raising evaluation
loss~\cite{sdc_llm_train_2026}, and permanent GPU faults can distort parameters
with no NaN, no Inf, and no loss anomaly~\cite{llmprism2026}. The most damaging
corruption is thus the kind that is never counted.

Production mitigations work at the level of a node: they detect a misbehaving
node, evict it, and roll back~\cite{sdc_llm_2025}. They do not say which
product was wrong, or where. Below the node, ABFT embeds checksums in the
matrix product~\cite{huang1984,jou1984,gunnels2001}. GPU instances fuse the
checksum into a custom GEMM kernel~\cite{ding2011,wu2023ftgemm} or encode the
operands~\cite{grid_ecc_2025,vabft2026}, and the position weighted form
localizes one error per checksum~\cite{jou1984,vabft2026}. Mixed precision
training makes all of this hard to deploy. The GEMMs that dominate training run
in closed vendor libraries that a fault tolerance layer cannot modify, and a
checksum over output stored in BF16 is swamped by storage rounding. ABFT forced
to BF16 raised false positives on healthy nodes~\cite{sdc_llm_2025}. V-ABFT
shows that verifying inside a fused kernel, before quantization, restores FP32
level sensitivity~\cite{vabft2026}. Our question is how to obtain that
sensitivity, together with entry level localization of several simultaneous
errors, without owning the kernel.

\sys{} verifies the product after the fact. The unmodified GEMM delivers its
FP32 accumulator instead of a BF16 store, the verifier reads $A$, $B$ and $C$,
and the caller narrows $C$ afterwards. A sum sketch of the error $E = AB - C$,
computed without forming $AB$, decides on every call whether $C$ is wrong.
When it fires, first moment sketches weighted by row and column index recover
the coordinates of each isolated error as a ratio. This is the weighted
checksum device~\cite{jou1984}, applied inside hash buckets so that several
errors separate. A candidate is accepted only if an independent recomputation
of that entry disagrees with $C$.

Two effects that do not exist over exact rings decide whether this works on
real hardware. First, the ratio must resolve an index among thousands of rows,
so the rounding noise of the sketch, not bucket collisions, sets the bucket
count, and that noise depends on the summation order of the library. Second, a
fault in a live accumulator is buried by the additions that follow it, and
hardware faults are often far smaller than the flipped bit suggests. The
standard evaluation method, writing a flipped value into the finished product,
never produces them.

\medskip\noindent\textbf{Contributions.}
\begin{itemize}
\item A design for ex post facto verification of half precision tensor core
  GEMMs (\S\ref{sec:design}). A cheap $S$ only probe runs on every call, while
  localization and repair run only on the rare dirty call. Localization
  combines FP32 moment sketches with scaled index weights, peeling across hash
  rounds, a bucket count and neighbourhood search sized from the noise measured
  on each product, and an exact scan for nonfinite entries
  (Algorithms~\ref{alg:localize} and~\ref{alg:pipeline}). False positives are
  impossible under a trusted recompute assumption (\S\ref{sec:analysis}).
\item A measured noise law for hashed moment localization (\S\ref{sec:noise}).
  On cuBLAS it bounds all $65$ H100 measurements, makes the bucket count scale
  as $n^{2.57}$, and assigns FP16 operands $1.76$ times the BF16 constant. Its
  $n_2$ exponent differs between libraries. Sizing by measured noise rather
  than by a fitted power of $n$ raises recovery on eight transformer shapes from
  $0.402$ to $1.000$.
\item A fault magnitude result (\S\ref{sec:rq2}). Measured against NVBit
  accumulator injection, output side injection overstates localization
  recovery. Sized for the measured magnitudes, the engine recovers $1.000$
  (held out campaigns: $0.975$ with BF16 operands, $1.000$ with FP16), and the
  flipped bit position cannot be recovered ($0/80$).
\item An H100 evaluation (\S\ref{sec:eval}) with instruction level injection
  across single bit, multiple bit, adjacent burst, stuck at and whole word
  upsets, real models (including accumulator faults on a real Llama-2-7B layer,
  recovered at $0.975$ to $1.000$), per shape deployable costs, and baselines
  on identical faults.
\end{itemize}

\section{Background, Threat Model, and Related Work}
\label{sec:background}

\subsection{Mixed Precision GEMM and Where to Verify}
\label{sec:delivered}

For $A\in\mathbb{R}^{n_1\times n_2}$ and $B\in\mathbb{R}^{n_2\times n_3}$, a
tensor core GEMM loads FP16 or BF16 operands, accumulates each partial sum in
an FP32 register, and by default stores $C$ back at the operand precision. The
instrumented instruction stream confirms this: FP16 and BF16 operands both emit
an \texttt{HMMA} whose destination is an FP32 partial sum
(\texttt{HMMA.16816.F32}, \texttt{HMMA.*.F32.BF16}).

The store discards sixteen mantissa bits, and reading them back returns zeros,
not data. A verifier shown the stored value sees a number from which the
evidence has already been removed, whatever its threshold. We therefore fix the
deployment point: \emph{the product is verified at accumulation precision,
before the store narrows it}. The GEMM is asked for an FP32 output (cuBLASLt
compute type \texttt{32F} with an FP32 result, or a kernel that stores the
accumulator it already holds), and the caller narrows $C$ after verification.
What this costs is memory traffic. Delivering $C$ at FP32 and narrowing it
afterwards adds $6n_1n_3$ bytes to a product of $2n_1n_2n_3$ flops, a fraction
$3\pi/(\beta n_2)$ of the GEMM for achieved tensor core throughput $\pi$ and
bandwidth $\beta$. On an H100 that is about $443/n_2$: $11\%$ at $n_2=4096$,
$3.1\%$ at $14336$, and $58\%$ at $768$. V-ABFT makes the same observation for
a checksum fused into the kernel~\cite{vabft2026}. Here the same point is
reached without modifying the kernel.

\subsection{Threat Model}
\label{sec:threat}

\emph{In scope} are compute faults in a processing element, FMA, or accumulator
that leave the stored operands intact and produce wrong entries in $C$, with
$\|E\|_0 = s \ll n_1n_3$. The evaluation covers single and multiple bit upsets,
stuck at lines, whole word corruption, and nonfinite results.

\emph{Out of scope} is corruption already present in $A$ or $B$, since a
recomputation reads the same wrong operand and ECC and checkpoint validation
apply there. Faults outside the matrix product are also out of scope, as are
products that are never materialized. Fused attention kernels never store
$QK^\top$, so it cannot be verified after the fact. Within a training step, an
undetected fault in the output of layer $k$ becomes the input of layer $k{+}1$.
Checking every guarded GEMM stops the fault before it is laundered into
legitimate data.

\emph{Trusted recomputation.} The guarantee of zero false positives
(Lemma~\ref{lem:validation}) requires that the recomputation validating a
candidate is not corrupted by the same fault. Our implementation recomputes at
FP32 on the same device, so a permanent fault that corrupts both the product
and its recomputation identically is not covered.

\subsection{Related Work}
\label{sec:related}

\emph{ABFT.} Huang and Abraham introduced checksum based ABFT for matrix
operations~\cite{huang1984}, and the position weighted checksum of Jou and
Abraham recovers the index of a single error from a ratio of
syndromes~\cite{jou1984}. Gunnels et al.\ brought detection and rollback to
high performance matrix multiplication~\cite{gunnels2001}, and Ding et al.\
located and corrected errors online on GPUs~\cite{ding2011}. FT-GEMM fuses ABFT
into a custom GPU SGEMM kernel at $8.89\%$ average overhead over
cuBLAS~\cite{wu2023ftgemm}. A-ABFT and V-ABFT derive tighter detection
thresholds~\cite{abft_gpu_2014,vabft2026}, ApproxABFT loosens them to cut
recomputation~\cite{approxabft2023}, ATTNChecker specializes to attention and
extreme values~\cite{attnchecker2025}, and grid like codes correct errors
confined to at most two rows and columns by encoding the
operands~\cite{grid_ecc_2025}. Table~\ref{tab:related} contrasts these
approaches.

\emph{Verifying and correcting products.} Freivalds' test decides whether
$C=AB$ in $O(n^2)$~\cite{freivalds}, and coding theoretic tests detect sparse
errors~\cite{bennett2024}. G\k{a}sieniec et al.\ correct $s$ wrong entries after
the fact over a ring by sketching the error with Pagh's compressed
multiplication~\cite{gasieniec2017,pagh2013}. Roche and Wu and Wang give exact
algorithms over fields and integers~\cite{pernet2018,wu2024}. Hashed
sketches~\cite{countmin_2005,achlioptas2003} and invertible Bloom lookup
tables~\cite{iblt_2011} supply the bucket structure. All of these assume exact
arithmetic, where the sizing problem driven by noise (\S\ref{sec:noise}) does
not arise.

\emph{SDC characterization and injection.} Field and injection studies
characterize SDC in LLM training and GPU
kernels~\cite{sdc_llm_2025,sdc_llm_train_2026,llmprism2026,tung2026,mpgemmfi_2023,chai2026}.
Error injection at a high level is known to misestimate
resilience~\cite{cho2013}, and single bit flips are a minority of gate level
GPU faults~\cite{tung2026}. Section~\ref{sec:rq2} measures one specific
consequence of this for localizers.
\emph{Floating point reproducibility.} Summation order, and with it rounding,
depends on the parallel decomposition of a library~\cite{ahrens2020}, and the
rounding noise of GPU matmul is structured~\cite{yashwanth2025}.
Section~\ref{sec:noise} quantifies what that noise does to localization.

\begin{table*}[t]
\centering
\caption{Fault tolerance for a single GEMM. \emph{Vendor GEMM} asks whether the
         method can protect an unmodified library call. n/s: not stated in the
         cited source.}
\label{tab:related}
\footnotesize
\setlength{\tabcolsep}{4pt}
\begin{tabular}{@{}l l l l l l@{}}
\toprule
Method & Redundancy & Operands & Errors localized & Vendor GEMM & Reported cost \\
\midrule
Checksum ABFT~\cite{huang1984} & row and column checksums & any & 1 & post hoc form & $O(n^2)$ \\
Weighted checksum~\cite{jou1984} & position weighted checksums & any & 1 per checksum & post hoc form & $O(n^2)$ \\
FT-GEMM~\cite{wu2023ftgemm} & fused into custom kernel & FP32 & n/s & no & $8.89\%$ over cuBLAS \\
V-ABFT~\cite{vabft2026} & encoded, fused for FP32 threshold & BF16 to FP64 & 1 per row & no & n/s \\
Grid like ECC~\cite{grid_ecc_2025} & encoded, enlarged product & real & $\le 2$ rows, columns & no & $1.24$ to $1.37\times$ latency \\
G\k{a}sieniec et al.~\cite{gasieniec2017} & hashed sketch & exact ring & $s$ & yes & $\tilde O(n^2+sn)$ \\
\textbf{This work} & hashed moment sketches & BF16/FP16 & $s$ (sized) & yes (FP32 output) & probe $0.78$ to $3.06$\,ms \\
\bottomrule
\end{tabular}
\end{table*}

\section{Design}
\label{sec:design}

\begin{figure}[t]
\centering
\begin{tikzpicture}[font=\scriptsize,
  box/.style={draw, rounded corners=2pt, align=center, inner sep=2pt,
              minimum height=7mm, text width=17mm},
  arr/.style={-{Stealth[length=1.6mm]}, semithick},
  lab/.style={font=\scriptsize, inner sep=1pt}]
\node[box] (gemm) {Unmodified GEMM\\ half prec.\ in, FP32 out};
\node[box, right=9mm of gemm] (probe) {\textbf{Probe}\\ sum sketch $S$};
\node[box, right=9mm of probe] (store) {Narrow and\\ store $C$};
\node[box, below=7mm of probe] (loc) {\textbf{Localize}\\ $S,R,T$, recompute};
\node[box, below=7mm of store] (apply) {\textbf{Apply}\\ repair, record};
\node[box, below=7mm of gemm, text=inkMuted, draw=inkMuted] (fleet) {Fleet\\ diagnosis};
\draw[arr] (gemm) -- node[lab, above]{$C$} (probe);
\draw[arr] (probe) -- node[lab, above]{clean} (store);
\draw[arr] (probe) -- node[lab, right]{dirty} (loc);
\draw[arr] (loc) -- (apply);
\draw[arr] (apply) -- (store);
\draw[arr, inkMuted] (loc) -- node[lab, above, text=inkMuted]{faults} (fleet);
\end{tikzpicture}
\caption{\sys{} runs after the GEMM and before the store. The probe runs on
every call, and localization and repair run only when it fires.}
\label{fig:overview}
\end{figure}

\subsection{Overview}

Figure~\ref{fig:overview} shows the three stages. The \emph{probe} answers
whether $C$ is wrong and runs on every call. \emph{Localize} answers where and
by how much, and it does not modify $C$. \emph{Apply} writes the repairs and
emits one record per fault. We keep the stages as separate calls because their
costs differ by orders of magnitude (\S\ref{sec:rq4}). A deployment that wants
only a correct product can recompute on detection and still localize for
telemetry.

\subsection{Moment Sketches}
\label{sec:sketches}

Draw hashes $h_1:[n_1]\to[m]$ and $h_2:[n_3]\to[m]$ together with Rademacher
signs $v_1,v_2\in\{\pm1\}$~\cite{achlioptas2003}. Let $H_1\in\mathbb{R}^{m\times
n_1}$ hold $v_1(i)$ at $(h_1(i),i)$, and define $H_2$ in the same way. The
\emph{sum sketch}
\begin{equation}
  S = H_1 E H_2^\top = (H_1A)(BH_2^\top) - H_1CH_2^\top
  \label{eq:S}
\end{equation}
aggregates the error in each bucket and is formed without $AB$. Its $m\times
n_2$ and $n_2\times m$ factors come from scatter adds over $A$ and $B$, and
their product costs $m^2n_2$. The first moment sketches $R$ and $T$ repeat
\eqref{eq:S} with the rows of $A$ and $C$ weighted by $w_i = i/2^{\lceil\log_2
n_1\rceil}$ and the columns of $B$ and $C$ weighted by $w_j = j/2^{\lceil\log_2
n_3\rceil}$, with $i$ and $j$ counted from $1$. If bucket $(a,b)$ holds exactly
one error at $(i,j)$, then
\begin{equation}
  i = 2^{\lceil\log_2 n_1\rceil}R_{ab}/S_{ab},\quad
  j = 2^{\lceil\log_2 n_3\rceil}T_{ab}/S_{ab}.
  \label{eq:recover}
\end{equation}
The weights are scaled by a power of two so that no term of $R$ is larger than
the corresponding term of $S$. Without the scaling, an error with $|E_{ij}| >
\mathrm{fp32max}/n_1$ would overflow $R$ while $S$ stayed finite. Dividing by a
power of two only shifts an exponent, so every rounding in the accumulation is
unchanged.

\subsection{The Cheap $S$ Only Probe}
\label{sec:probe}

Production GEMMs are clean on the overwhelming majority of calls, so the stage
that decides deployed cost is the one that runs on a clean product. Deciding
\emph{whether} $C$ is wrong needs only $S$. If no bucket of $S$ exceeds the
probe threshold, no error large enough to localize is present, and $R$ and $T$
need not be formed. The probe therefore builds $S$ alone from a scatter over
the rows of $A$, a scatter over the rows of $C$ folded over its columns, and
one $m\times n_2\times m$ product with the factor $BH_2^\top$ of the $B$ side,
which is cached because weights stay the same across calls. It forms three of
the seven accumulated outputs that a localization round needs, reads $A$ once
and $C$ once, and recomputes no entry. Its output is a single bit, thresholded
as in \S\ref{sec:thresholds} together with an explicit finiteness test
(\S\ref{sec:nonfinite}). That bit can stay on the device until the caller reads
it once per step. The probe costs $O(n^2)$ (Theorem~\ref{thm:cost}) and never
walks the neighbourhood search. Its measured cost appears in \S\ref{sec:rq4}.

\subsection{Validation}
\label{sec:validation}

A bucket holding several errors yields a meaningless ratio, so every candidate
$(i,j)$ is validated. The inner product $\gamma = A_{i,*}B_{*,j}$ is recomputed
with FP32 accumulation, and the candidate is accepted only if $|\gamma -
C_{ij}| > \tau_{ij}$, where $\tau_{ij} = c\,\varepsilon\sum_k|A_{ik}||B_{kj}|$
bounds the rounding of that inner product~\cite{higham2002}, $c=100$, and
$\varepsilon = 2^{-23}$ is the FP32 machine epsilon. The accepted correction
is $\delta_{ij} = \gamma - C_{ij}$.

\subsection{Thresholds}
\label{sec:thresholds}

The analytic threshold
\[
\tau = 100\,n_2\,\varepsilon\,\max_{ik}|A_{ik}|\,\max_{kj}|B_{kj}|
\]
bounds the worst case accumulation noise in a bucket. Probe
and localizer threshold the same kind of sketch but run different tests. The
output of the probe is a single bit with no filter after it, so the threshold
must clear the \emph{maximum} over $m^2$ clean buckets:
$\mathit{thr} = \min(\tau,\ \mu\sqrt{2\ln m^2}\,\hat\sigma)$ with $\hat\sigma =
1.2533\cdot\mathrm{mean}(\min(|S|,5\,\mathrm{mean}|S|))$ and margin $\mu = 4$.
This clipped mean needs no sort, so the estimate stays on the device.

The localizer selects candidates, and a false candidate costs one
recomputation, so its threshold may be looser. It uses
$\tau_c = \min(\tau,\, k\,\hat\sigma_{\mathrm{MAD}})$ with $k = 2$, where
$\hat\sigma_{\mathrm{MAD}} = 1.4826\cdot\mathrm{median}\big(\big||S| -
\mathrm{median}|S|\big|\big)$ estimates the clean bucket noise from the median
absolute deviation of $|S|$. About $4.6\%$ of clean buckets clear $\tau_c$ by chance.
Validation discards all of them, but each one would still walk the whole
neighbourhood search. A lone fault of size $\delta$ decodes its row to within
about $e\max(n_1,n_3)\,\sigma/\delta$, where $e$ is the normalized index error
(median $0.58$, $95$th percentile $2.5$). A bucket can therefore localize within
radius $r$ only if its signal is at least of order $\max(n_1,n_3)\,\sigma/r$.
The localizer adds this as a floor,
$\tau_d = \max(n_1,n_3)\,\hat\sigma_{\mathrm{MAD}}/\max(r,\tfrac12)$, keeps the
buckets with $|S_{ab}| > \max(\tau_c,\tau_d)$, and retains at most
$K = \max(64, 8s)$ of them, ranked by $|S_{ab}|$.

\subsection{Sketch Noise Sets the Bucket Count}
\label{sec:noise}

Over a ring, collisions alone set $m$. The condition
$m \ge \max(3c\Delta, \sqrt{3cs})$ isolates an error with probability at least
$1-1/c$, where $\Delta$ bounds the errors per row or column
(Lemma~\ref{lem:isolation}). In floating point, \eqref{eq:S} is the difference
of two independently rounded accumulations, and its noise $\sigma$ grows with
the number of entries per bucket. On an H100 (cuBLAS, BF16 operands, FP32
accumulation), normalizing each measured $\sigma$ by
$2^{-24}\sqrt{n_1n_3n_2}\,\mathrm{rms}(C)/m$ gives $0.1726$ to $0.1749$ across
$n_3\in[1024,16384]$ and $0.1717$ to $0.1783$ across $m\in[32,512]$. Across
$n_2\in[1024,65536]$, however, the same normalized value climbs from $0.162$ to
$0.219$. The $n_2$ exponent is $0.57$ rather than $1/2$, and with it the
normalized value stays flat to within $1.10\times$ over that sweep. FP16
operands are noisier than BF16 at every one of $19$ identical shapes, by
$1.20$ to $1.76\times$ with a median of $1.47$. Fitting all $65$ H100
measurements ($43$ BF16, $22$ FP16) as an upper envelope gives
\begin{equation}
  \sigma \lesssim c_f\,2^{-24}\sqrt{n_1n_3n_2}\,(n_2/8192)^{0.07}\;\mathrm{rms}(C)/m
  \label{eq:sigma}
\end{equation}
with $c_f = 0.19$ for BF16 and $c_f = 0.33$ ($1.76\times0.19$) for FP16. No
measurement exceeds \eqref{eq:sigma}. The earlier fit $0.17\cdot2^{-24}
\sqrt{n_1n_3n_2}\,\mathrm{rms}(C)/m$, with no format term, was exceeded by
$17$ of the $43$ BF16 measurements, by up to $1.29\times$.

Rounding in \eqref{eq:recover} perturbs the recovered row by
$(\nu_R - i\,\nu_S)/\delta$, where $\nu_R$ and $\nu_S$ are the bucket noises and
$\delta = |E_{ij}|$. The scatter is $\Theta(\max(n_1,n_3)\,\sigma/\delta)$ with
zero bias, and we call $B = \max(n_1,n_3)\,\sigma/\delta$ the \emph{index
budget}. Keeping $B$ below a target $B^\star$ requires
\begin{equation}
  \begin{split}
  m_{\mathrm{num}} \gtrsim{}& \frac{\max(n_1,n_3)}{B^\star\,\delta}\,
  c_f\,2^{-24}\sqrt{n_1n_3n_2}\\
  &\cdot(n_2/8192)^{0.07}\,\mathrm{rms}(C)
  \end{split}
  \label{eq:mnum}
\end{equation}
which is $\Theta(n^{2.57})$ for a square product. \sys{} sizes for $\delta =
\mathrm{rms}(C)$ with $B^\star = 0.3$. With the collision requirement
$m_{\mathrm{comb}} = \lceil\max(3c\Delta, \sqrt{3cs})\rceil$ at $c = 2$, it
starts from
\begin{equation}
  \begin{split}
  m = \min\big(&\max(m_{\mathrm{comb}},\ m_{\mathrm{num}},\ 16),\\
  &\max(m_{\mathrm{comb}},\ m_{\max})\big)
  \end{split}
  \label{eq:mstart}
\end{equation}
The cap
$m_{\max} = \max(16, \lfloor\sqrt{n_1n_3/16}\rfloor)$ keeps an average of $16$
entries in each bucket. Below that population the noise estimates collapse and
clean products read as dirty, which we observed at $m = 966$ on a
$1024\times1024$ output.

The constant belongs to the BLAS, not to the algorithm. $\sigma$ measures the
disagreement between two summation orders, and the blocking of a library sets
how far it departs from sequential summation~\cite{ahrens2020}. The $n_2$
exponent is $+0.587$ for an explicitly sequential sum, $+0.200$ through a
blocked CPU matmul (oneDNN) and $+0.133$ for the sketch built on it, against
$+0.577$ for the sketch on cuBLAS. A rule fitted on one library and deployed on
the other underprovisions $m$ by $4.4\times$ at $n_2 = 8192$. For this reason
\sys{} uses \eqref{eq:sigma} only as a starting point and measures $\sigma$ at
run time (\S\ref{sec:radius}).

\subsection{Fault Magnitude, Calibration, and the Neighbourhood Search}
\label{sec:radius}

Equation~\eqref{eq:mnum} is written for a fault the size of a typical entry,
$\rho \equiv \delta/\mathrm{rms}(C) = 1$. Faults in a live accumulator can be
much smaller (\S\ref{sec:rq2}), which multiplies $B$ by $1/\rho$. A larger $m$
and a wider validation window both absorb that factor. Validating a
$(2r{+}1)^2$ neighbourhood of the rounded index costs that many inner products
per candidate, and on the GPU this dominates: at the Llama-2-7B
\texttt{down\_proj} shape ($4096{\times}11008{\times}4096$), localization with
radius $16$ took $1153$ to $1196$\,ms at every $m$ from $112$ to $960$, while
radius $4$ took $96$ to $97$\,ms. \sys{} therefore declares the smallest
magnitude it must localize, $\rho_{\min}$ (default $0.02$), sizes the bucket
count for a target radius $r^\star = 4$, and lets the radius grow toward
$r_{\max} = 16$ only when $m$ cannot grow further.

\emph{Calibration.} Equation~\eqref{eq:sigma} is fitted on one GPU, one library
and Gaussian operands. Other kernels and real activations change the noise. On
a Llama-2-7B \texttt{down\_proj} layer, products from the tiled kernel used for
injection carried about $8\times$ the noise of cuBLAS products on the same
operands. The plan for a dirty call is therefore computed from the noise
measured on that call. Let $\hat\sigma$ be the probe estimate of
\S\ref{sec:thresholds} if the probe has just run on this $C$, and otherwise the
estimate $\hat\sigma_{\mathrm{MAD}}$ of a freshly built $S$. Let
$\widehat{\mathrm{rms}}(C)$ be the rms of the finite entries of $C$ taken at a
stride of $\max(1,\lfloor n_1n_3/65536\rfloor)$, with magnitudes clipped at
$64$ times their median. Then
\begin{gather*}
B = \frac{\max(n_1,n_3)\,\hat\sigma}{\rho_{\min}\,\widehat{\mathrm{rms}}(C)},\\
r_B = \begin{cases} 0 & B \le 0.05,\\ \max(2,\lceil 2.5\,B\rceil) & \text{otherwise.}\end{cases}
\end{gather*}
The factor $2.5$ covers the $95$th percentile of measured index error. If
$r_B \le r^\star$, localization keeps $m_{\mathrm{loc}} = m$ and uses $r = r_B$.
Otherwise it builds its sketches at
\[
m_{\mathrm{loc}} = \max\big(m,\ \min(16\lceil\lceil m\,r_B/r^\star\rceil/16\rceil,\
m_{\max},\ m_{\mathrm{mem}})\big),
\]
where $m_{\mathrm{mem}}$ is the largest $m$
with $4(3m^2+6mn_2)$ bytes within $2$\,GiB. The radius at that size follows from
$B' = B\,m/m_{\mathrm{loc}}$ as $r = 0$ if $B' \le 0.05$ and
$r = \min(r_{\max}, \max(2,\lceil 2.5\,B'\rceil))$ otherwise. If $\hat\sigma$
or $\widehat{\mathrm{rms}}(C)$ is not a positive finite number, the plan falls
back to the same rule with $\sigma$ taken from \eqref{eq:sigma}.

\emph{Recalibrating the probe.} The probe keeps its own $m$, which sets its
detection floor. It averages $\hat\sigma$ over the calls it reports clean,
without a host synchronization, and after $16$ calls and then every $1024$ it
computes $B_1 = \max(n_1,n_3)\,\bar\sigma/\widehat{\mathrm{rms}}(C)$ and
\[
m' = \max\big(m_{\mathrm{comb}},\ 16,\ \min(\lceil m\,B_1/B^\star\rceil,\ m_{\max})\big).
\]
When $m'$ lies outside $[m/1.25,\ 1.25\,m]$, the engine rebuilds
its hashes, buffers, caches and graphs at $m'$. The formula therefore matters
only until the first measurement. After repair, \sys{} probes the product once
more with a fresh hash round and recomputes it if the probe still fires, so a
missed fault that the probe can still see does not reach the output. None of
these steps runs on a clean product.

\subsection{Peeling Across Hash Rounds}
\label{sec:peel}

A fault occupies exactly one bucket per hash round, so subtracting a recovered
fault inside the round in which it was found unlocks nothing. Across rounds it
does. \sys{} draws $k = 3$ independent hash rounds and, before selecting
candidates in a round, subtracts every finite fault confirmed so far from the
$S$, $R$ and $T$ of that round. A bucket that would still hold a mixture then
becomes a singleton. Only validated faults are subtracted, and every new
candidate is validated again, so peeling cannot create a false correction.

\subsection{Nonfinite and Out of Range Entries}
\label{sec:nonfinite}

An infinite or NaN entry is contagious in a linear sketch. One inf in the row
scatter poisons a bucket row, and every fault sharing that row becomes
unrecoverable. \sys{} finds such entries, and any entry with
$|C_{ij}| > \mathrm{fp32max}/16$, with an exact elementwise scan and keeps at
most $K$ of them. Each is recomputed as an absolute value rather than a delta,
because $\mathrm{NaN} + \delta$ is NaN, and $C$ holds the recomputed value while
the sketches are built. These entries are not subtracted when peeling, since
their error is already absent from the sketch. The probe tests the finiteness
of $S$ explicitly: $(|S| > \mathit{thr})$ is false for a NaN bucket, so a probe
that only compared values would report such a product clean.

\begin{algorithm}[t]
\caption{One round of floating point localization}
\label{alg:localize}
\begin{algorithmic}[1]
\Require $A$, $B$, $C$ (FP32), round $(h_1,v_1,h_2,v_2)$ at $m_{\mathrm{loc}}$, radius $r$, confirmed set $F$ with absolute entries $F_{\mathrm{abs}}\subseteq F$
\State $S,R,T \gets$ sketches of $E = AB-C$ with scaled weights (\S\ref{sec:sketches})
\ForAll{$(i,j,\delta)\in F\setminus F_{\mathrm{abs}}$} \Comment{peel across rounds}
  \State subtract $v_1(i)v_2(j)\,\delta\cdot(1,w_i,w_j)$ from $(S,R,T)$ at $(h_1(i),h_2(j))$
\EndFor
\State $\hat\sigma \gets$ MAD estimate of $S$
\State $\tau_c \gets \min(\tau, 2\hat\sigma)$, \ $\tau_d \gets \max(n_1,n_3)\,\hat\sigma/\max(r,\tfrac12)$
\State $Q \gets$ the $K$ largest $|S_{ab}|$ with $|S_{ab}| > \max(\tau_c,\tau_d)$
\State $U \gets \{(a,b)\in Q : S_{ab}\ne0$ and $S_{ab},R_{ab},T_{ab}$ finite$\}$
\ForAll{$(a,b)\in U$}
  \State $(\hat\imath_{ab},\hat\jmath_{ab}) \gets$ \eqref{eq:recover}, rounded to integers
\EndFor
\ForAll{offsets $(d_i,d_j)$ with $|d_i|,|d_j|\le r$, by Chebyshev then Manhattan distance}
  \ForAll{$(a,b)\in U$, in a fixed order}
    \State $(i,j) \gets (\hat\imath_{ab}+d_i,\hat\jmath_{ab}+d_j)$
    \If{$(i,j)$ lies outside $C$ \textbf{or} $(i,j)\in F$} \textbf{continue} \EndIf
    \State $\gamma \gets A_{i,*}B_{*,j}$ \Comment{FP32 accumulation}
    \If{$|\gamma - C_{ij}| > \tau_{ij}$}
      \State $F \gets F\cup\{(i,j,\gamma - C_{ij})\}$, \ $U \gets U\setminus\{(a,b)\}$
    \EndIf
  \EndFor
\EndFor
\State \Return $F$
\end{algorithmic}
\end{algorithm}

\begin{algorithm}[t]
\caption{Verify, localize, and apply}
\label{alg:pipeline}
\begin{algorithmic}[1]
\Require $A$, $B$, $C$, probe size $m$, rounds $k$, $\rho_{\min}$, $r^\star$, $r_{\max}$, $K$
\Statex \textbf{$S$ only probe} (every GEMM, \S\ref{sec:probe})
\State reject $C$ unless it is FP32 \Comment{\S\ref{sec:delivered}}
\State $S \gets$ sum sketch \eqref{eq:S} at $m$ \Comment{$B$ side cached}
\State $\hat\sigma \gets 1.2533\cdot\mathrm{mean}(\min(|S|,5\,\mathrm{mean}|S|))$
\State $\mathit{thr} \gets \min(\tau, \mu\sqrt{2\ln m^2}\,\hat\sigma)$
\If{$\max|S| \le \mathit{thr}$ \textbf{and} $S$ finite}
  \State add $\hat\sigma$ to the recalibration average (\S\ref{sec:radius})
  \State \Return \textsc{clean}
\EndIf
\Statex \textbf{Localize} (dirty GEMM only)
\State $F_{\mathrm{abs}} \gets \{(i,j,A_{i,*}B_{*,j}) : C_{ij}$ nonfinite or $|C_{ij}|>\mathrm{fp32max}/16\}$, at most $K$
\State $F \gets F_{\mathrm{abs}}$, and set those $C_{ij}$ to their recomputed values
\State $(m_{\mathrm{loc}}, r) \gets$ plan from $\hat\sigma$ and $\widehat{\mathrm{rms}}(C)$ \Comment{\S\ref{sec:radius}}
\For{$t = 1,\dots,k$}
  \State $F \gets$ Algorithm~\ref{alg:localize} with a fresh round at $m_{\mathrm{loc}}$, $r$, $F$
\EndFor
\State restore the entries of $F_{\mathrm{abs}}$ in $C$
\Statex \textbf{Apply}
\ForAll{$(i,j,\delta)\in F$}
  \State $C_{ij} \gets \delta$ if $(i,j,\delta)\in F_{\mathrm{abs}}$, else $C_{ij} \gets C_{ij}+\delta$
  \State emit record $(i,j,\delta)$
\EndFor
\If{the probe fires on $C$ with a fresh round} $C \gets AB$ at FP32 \EndIf
\end{algorithmic}
\end{algorithm}

\section{Analysis}
\label{sec:analysis}

\begin{lemma}[Validation soundness]
\label{lem:validation}
If the recomputation of $\gamma$ is not affected by the fault and $\tau_{ij}$
bounds its FP32 rounding, every accepted $(i,j)$ is a corrupted entry.
\end{lemma}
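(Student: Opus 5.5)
The argument is a contrapositive through the triangle inequality, and it does not depend on how the candidate $(i,j)$ was produced. Fix any candidate that Algorithm~\ref{alg:localize} or the nonfinite scan of \S\ref{sec:nonfinite} submits for validation, and write $P_{ij} = \sum_k A_{ik}B_{kj}$ for the exact inner product. A corrupted entry is one whose delivered value $C_{ij}$ differs from what a fault-free GEMM would have produced. The claim is then that acceptance forces $C_{ij}$ to lie outside the set of values a fault-free computation could have returned.

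The first step states the two hypotheses as inequalities. Since the recomputation is unaffected by the fault and $\tau_{ij}$ bounds its FP32 rounding, we have $|\gamma - P_{ij}| \le \tau_{ij}/2$. To be safe I would read the constant $c = 100$ in $\tau_{ij} = c\,\varepsilon\sum_k|A_{ik}||B_{kj}|$ as covering the standard Higham bound $n_2\varepsilon\sum_k|A_{ik}||B_{kj}|/(1-n_2\varepsilon)$ twice over. The same hypothesis applies to a clean entry: an uncorrupted $C_{ij}$ is also an FP32-accumulated inner product of the same operands, though computed in a different summation order. The Higham bound is independent of summation order, so a clean $C_{ij}$ satisfies $|C_{ij} - P_{ij}| \le \tau_{ij}/2$.

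The second step is the contrapositive itself. Suppose $(i,j)$ is not corrupted. Then $|\gamma - C_{ij}| \le |\gamma - P_{ij}| + |P_{ij} - C_{ij}| \le \tau_{ij}$, so the acceptance test $|\gamma - C_{ij}| > \tau_{ij}$ fails. Hence every accepted candidate is corrupted.

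The third step covers the remaining paths into $F$. Entries in $F_{\mathrm{abs}}$ are either nonfinite, which is already a corruption since fault-free products of finite operands are finite under the threshold regime, or have magnitude above $\mathrm{fp32max}/16$ and go through the same comparison. Peeling across rounds only subtracts entries that were already validated and never adds one without the test, so induction over rounds and offsets keeps $F$ free of false entries. Neither the candidate-generation procedure nor the sketches enter the argument.

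The main obstacle is the first step. The constant $100$ has to dominate the rounding of both the recomputation and the original GEMM, and the original runs on tensor cores with half-precision products and FP32 accumulation in an unknown order. I would handle this by stating explicitly that the hypothesis ``$\tau_{ij}$ bounds its FP32 rounding'' covers both accumulations. Equivalently, I would define a corrupted entry as one with $|C_{ij}-P_{ij}| > \tau_{ij}/2$, which makes the lemma exact. I would then remark that $100\varepsilon$ exceeds $2n_2\varepsilon$ only when $n_2 \le 50$. For realistic $n_2$, the bound therefore relies on the probabilistic $\sqrt{n_2}$ rounding behaviour rather than the worst case, and that reliance is part of the trusted-recompute assumption.
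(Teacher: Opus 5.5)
Your contrapositive is sound, but it proves a different statement from the paper's, by a different route. The paper reads ``corrupted'' as $C_{ij}\neq(AB)_{ij}$ and the hypothesis as $|\gamma-(AB)_{ij}|\le\tau_{ij}$. Its proof is then one line: if $C_{ij}=(AB)_{ij}$ then $|\gamma-C_{ij}|=|\gamma-(AB)_{ij}|\le\tau_{ij}$, and the test rejects. It needs no statement about how the original GEMM rounds, and it does not split $\tau_{ij}$.

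You instead read ``corrupted'' as ``differs from what a fault-free GEMM could deliver.'' Under that reading the lemma's hypothesis is genuinely insufficient. Before the triangle inequality closes, you must also assume that the tensor-core accumulation of a clean entry stays within $\tau_{ij}/2$ of $(AB)_{ij}$, and you must halve the tolerance granted to the recomputation. You saw this and said so, which is the right move.

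Each version buys something different. The paper's follows from exactly the stated hypothesis, but in floating point its conclusion is nearly empty. A clean FP32 entry generally differs from the exact product as well, so $C_{ij}\neq(AB)_{ij}$ does not by itself exclude flagging an entry that is merely rounded differently from $\gamma$. Your version delivers what the ``zero false positives'' claim actually needs, namely $|C_{ij}-(AB)_{ij}|>\tau_{ij}/2$ for every accepted entry. The price is an explicit assumption about the rounding of the vendor kernel.

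Two side remarks. First, your third step goes beyond the lemma and misdescribes the pipeline. In Algorithm~\ref{alg:pipeline} the entries of $F_{\mathrm{abs}}$, including finite ones with $|C_{ij}|>\mathrm{fp32max}/16$, are recomputed and recorded unconditionally, without the comparison against $\tau_{ij}$. So you cannot conclude by induction that all of $F$ is free of clean entries. Since the lemma concerns only candidates accepted by $|\gamma-C_{ij}|>\tau_{ij}$, simply drop that step.

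Second, the worst-case inner-product bound is stated in the unit roundoff $u=2^{-24}=\varepsilon/2$, not in $\varepsilon$. So $\tau_{ij}/2=100u\sum_k|A_{ik}||B_{kj}|$ dominates $n_2u\sum_k|A_{ik}||B_{kj}|/(1-n_2u)$ up to about $n_2=100$, not $50$. Your qualitative conclusion is unaffected: for realistic $n_2$ the tolerance rests on typical rather than worst-case rounding.
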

\begin{proof}
By assumption $|\gamma - (AB)_{ij}| \le \tau_{ij}$, so $|\gamma - C_{ij}| >
\tau_{ij}$ implies $C_{ij} \ne (AB)_{ij}$.
\end{proof}

\begin{lemma}[Collision]
\label{lem:collision}
For a fixed error $(i,j)$ under independent uniform hashes,
$\Pr[(i,j)\text{ not isolated}] \le 2(\Delta-1)/m + s/m^2$.
\end{lemma}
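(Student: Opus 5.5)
The plan is to fix the error $(i,j)$ and to apply a union bound over the other $s-1$ errors $(i',j')$. The error $(i,j)$ fails to be isolated exactly when some other error lands in the same bucket, that is, when $h_1(i')=h_1(i)$ and $h_2(j')=h_2(j)$. Signs play no role here, since isolation is a property of the hash images alone. So it suffices to bound, for each $(i',j')\ne(i,j)$, the probability $p_{i'j'}$ that both hash coordinates collide, and then to sum.

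We split the other errors into three classes. In the first class, $i'=i$ and $j'\ne j$, so the error shares row $i$. Here $h_1(i')=h_1(i)$ holds deterministically, and the event reduces to $h_2(j')=h_2(j)$ for distinct column indices. Under independent uniform hashing this has probability $1/m$. By the definition of $\Delta$ there are at most $\Delta-1$ such errors, which contributes at most $(\Delta-1)/m$. The second class, $j'=j$ and $i'\ne i$, is symmetric: $h_1(i')=h_1(i)$ has probability $1/m$, giving another $(\Delta-1)/m$. In the third class, $i'\ne i$ and $j'\ne j$. The two events $h_1(i')=h_1(i)$ and $h_2(j')=h_2(j)$ concern different hash functions, and $h_1,h_2$ are drawn independently, so the joint probability is $1/m\cdot 1/m=1/m^2$. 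There are at most $s-1\le s$ such errors, which contributes at most $s/m^2$.

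Summing the three classes gives $\Pr[(i,j)\text{ not isolated}]\le 2(\Delta-1)/m+s/m^2$, as stated. There is a little slack, since the third class actually has at most $s-1-2(\Delta-1)$ members when the row and column counts are saturated; we simply bound it by $s$.

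The only step that needs care is the independence claim. We need pairwise independence within each of $h_1$ and $h_2$ (so a collision of two distinct indices has probability exactly $1/m$), together with independence between $h_1$ and $h_2$ (so that the product bound holds in the third class). "Independent uniform hashes" supplies both. We also have to make sure that the interpretation of $\Delta$ as a bound on the errors per row or column, as given in \S\ref{sec:noise}, counts the error $(i,j)$ itself. That is what yields $\Delta-1$ rather than $\Delta$.
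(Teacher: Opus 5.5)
Your proposal is correct and follows the paper's own argument: a union bound over the other errors, charging $1/m$ to each of the at most $2(\Delta-1)$ errors that share row $i$ or column $j$, and $1/m^2$ to every remaining error, whose count you bound by $s$. Your three-way split, the independence requirements you list, and your note that $\Delta$ must count $(i,j)$ itself simply spell out what the paper's two-line proof leaves implicit.
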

\begin{proof}
An error in the same row or column collides with probability $1/m$. Any other
error collides only if both hashes agree, which happens with probability
$1/m^2$. A union bound completes the argument.
\end{proof}

\begin{lemma}[Isolation per round]
\label{lem:isolation}
If $m \ge 3c\Delta$ and $m^2 \ge 3cs$ with $c > 1$, each error is isolated in a
round with probability at least $1 - 1/c$.
\end{lemma}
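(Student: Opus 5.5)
The plan is to combine Lemma~\ref{lem:collision} with the two hypotheses on $m$, bounding each term of the failure probability separately and then adding them. Fix an error $(i,j)$. Lemma~\ref{lem:collision} gives
\[
\Pr[(i,j)\text{ not isolated}] \le \frac{2(\Delta-1)}{m} + \frac{s}{m^2},
\]
so it suffices to show that the right-hand side is at most $1/c$.

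For the first term I use $m \ge 3c\Delta$. Then $2(\Delta-1)/m < 2\Delta/m \le 2\Delta/(3c\Delta) = 2/(3c)$. For the second term I use $m^2 \ge 3cs$, which gives $s/m^2 \le 1/(3c)$. Adding the two bounds yields a failure probability of at most $2/(3c) + 1/(3c) = 1/c$. Hence $(i,j)$ is isolated with probability at least $1-1/c$. The condition $c>1$ only ensures that this bound is nontrivial.

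The one point that needs care is the bookkeeping behind the collision lemma. The $2(\Delta-1)$ count refers to the other errors sharing row $i$ (at most $\Delta-1$) or column $j$ (at most $\Delta-1$), each colliding with probability $1/m$. The remaining at most $s$ errors collide only when both hashes agree, which has probability $1/m^2$ under independence of $h_1$ and $h_2$. Since Lemma~\ref{lem:collision} is already stated, this is only a check that its bound is used as stated.

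I do not expect a real obstacle, because the argument is just two inequalities and a sum. The slack of $\Delta$ versus $\Delta-1$ is harmless. If anything is delicate, it is noting that isolation in a round concerns hash collisions only. Here isolation means no other error shares the bucket; the floating-point noise of \S\ref{sec:noise} is a separate issue that this combinatorial statement does not address.
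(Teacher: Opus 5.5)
Your proof is correct and follows the paper's route. The paper gives no separate proof of this lemma, but its constants are evidently chosen so that substituting $m\ge 3c\Delta$ and $m^2\ge 3cs$ into the bound of Lemma~\ref{lem:collision} gives $2/(3c)+1/(3c)=1/c$, exactly as you do.
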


\begin{theorem}[Floating point recovery]
\label{thm:float}
Let error $(i,j)$ be isolated in bucket $(a,b)$ with $|S_{ab}| >
\max(\tau_c,\tau_d)$ and among the $K$ largest such buckets, and let the rounded
index of~\eqref{eq:recover} lie within $r$ of $(i,j)$ in each coordinate. Then
Algorithm~\ref{alg:localize} returns $(i,j,\delta)$ with
$|\delta - E_{ij}| \le \tau_{ij}$ and returns no uncorrupted entry. Since the
index scatter is $\Theta(B)$ (\S\ref{sec:noise}), the index condition holds with
high probability once $r \ge 2.5B$.
\end{theorem}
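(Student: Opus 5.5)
The plan is to split the statement into two parts: a soundness claim, that nothing uncorrupted is returned, and a completeness claim, that $(i,j)$ is returned with an accurate $\delta$. Both are handled by tracing the control flow of Algorithm~\ref{alg:localize}. The closing sentence about $r \ge 2.5B$ is treated separately, as a probabilistic remark.

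\emph{Soundness.} Every triple added to $F$ passes the test $|\gamma - C_{ij}| > \tau_{ij}$ on a fresh recomputation. Lemma~\ref{lem:validation} then says that each such $(i,j)$ is a corrupted entry. Triples inherited from earlier rounds were admitted under the same test, or come from the exact nonfinite scan, so they are corrupted too. Peeling changes only which buckets are selected, never the acceptance test, so it cannot introduce an uncorrupted entry.

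\emph{Completeness.} I would follow bucket $(a,b)$ through the algorithm.
\begin{itemize}
\item Because $|S_{ab}| > \max(\tau_c,\tau_d)$ and it ranks among the $K$ largest, $(a,b)\in Q$.
\item Because $|S_{ab}|>\tau_c\ge 0$, we have $S_{ab}\neq 0$. If $R_{ab}$ and $T_{ab}$ are finite, then $(a,b)\in U$. Finiteness holds because the scaled weights $w_i\le 1$ keep $|R_{ab}|,|T_{ab}|$ of the order of $|S_{ab}|$ (\S\ref{sec:sketches}), and because huge or nonfinite entries were removed by the scan (\S\ref{sec:nonfinite}). This is a hypothesis I would state explicitly.
\item By assumption the rounded index $(\hat\imath_{ab},\hat\jmath_{ab})$ satisfies $|\hat\imath_{ab}-i|,|\hat\jmath_{ab}-j|\le r$. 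So the offset $(d_i,d_j)=(i-\hat\imath_{ab},\, j-\hat\jmath_{ab})$ occurs in the loop over offsets, and $(i,j)$ lies inside $C$.
\end{itemize}

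The delicate step, and the one I expect to be the main obstacle, is that $(a,b)$ may be removed from $U$ before that offset is reached. This happens if it validates some other corrupted entry $(i',j')$ at a nearer offset, and $(i,j)$ is then never visited. I would try to close this gap in two ways.
\begin{itemize}
\item First, by noting that the other routes to a miss all leave the statement intact. If $(i,j)$ was already in $F$ (skipped by the ``$\in F$'' test), or was accepted through a different bucket, then it is already returned.
\item Second, for the remaining route, I would interpret the hypothesis as saying that $(i,j)$ is the nearest corrupted entry to the decoded index in the search order. Alternatively, I would add to the statement that no other corrupted entry lies in the $(2r+1)^2$ window ahead of it. If neither works, I would weaken the conclusion to ``returns $(i,j)$ or $(i,j)$ is found in a later round,'' which is the form used with peeling.
\end{itemize}

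Given that $(i,j)$ is visited, I would check acceptance. Since $C_{ij} = (AB)_{ij} - E_{ij}$, we have $\gamma - C_{ij} = E_{ij} + (\gamma - (AB)_{ij})$, and hence $|\delta - E_{ij}| = |\gamma-(AB)_{ij}|\le\tau_{ij}$. For acceptance itself one needs $|E_{ij}| > 2\tau_{ij}$. I would derive this from the signal floor: $|S_{ab}| = |E_{ij}| + O(\sigma)$ for an isolated bucket, and $|S_{ab}|>\tau_d \ge 2\sigma\max(n_1,n_3)/ (2r+1)$, which dominates $\tau_{ij}$ for realistic shapes. Where it does not, I would add the condition as a stated hypothesis.

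Finally, for the last sentence I would use the model of \S\ref{sec:noise}. The decoding error $(\nu_R - i\nu_S)/\delta$ has zero bias and scale $\Theta(B)$, and the measured $95$th percentile of the normalized error is $2.5$. So $r\ge 2.5B$ places the true index inside the window with the stated probability, and Lemma~\ref{lem:isolation} supplies isolation in each round.
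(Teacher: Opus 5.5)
Your plan follows the paper's argument but is more careful. The paper's entire proof is ``the search reaches $(i,j)$, and validation accepts it because $|E_{ij}|$ exceeds the rounding bound; Lemma~\ref{lem:validation} excludes every clean position visited,'' and it does not notice either obstacle you raise. Both obstacles are real.

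\emph{Early removal.} Offsets form the outer loop, and $(a,b)$ leaves $U$ at its first acceptance. So a different corrupted entry, not yet in $F$, that lies earlier in the Chebyshev--Manhattan order around $(\hat\imath_{ab},\hat\jmath_{ab})$ is claimed by $(a,b)$, and $(i,j)$ is then never visited through that bucket. Isolation constrains hash buckets, not spatial neighbours, so the hypotheses do not exclude this.

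\emph{Acceptance.} Validation needs $|E_{ij}+(\gamma-(AB)_{ij})|>\tau_{ij}$. The condition $|E_{ij}|>2\tau_{ij}$ guarantees this, but no stated hypothesis implies it. The paper's ``exceeds the rounding bound'' is neither a stated hypothesis nor sufficient by itself.

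Of your remedies, choose the explicit hypotheses. The index hypothesis cannot be reread as ``$(i,j)$ is the first corrupted entry in the search order.'' The $\tau_d$ route is not a proof, because $|S_{ab}|=|E_{ij}|+O(\sigma)$ is a statistical statement about rounding noise, not a deterministic bound. Membership in $U$ also needs $S_{ab}$ finite, not only $R_{ab}$ and $T_{ab}$, since $|S_{ab}|>\max(\tau_c,\tau_d)$ does not exclude $\pm\infty$. With these hypotheses added, your trace together with $\delta-E_{ij}=\gamma-(AB)_{ij}$ is a complete proof.

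One claim in your soundness paragraph is wrong: entries inherited from the exact scan need not be corrupted. Algorithm~\ref{alg:pipeline} places every entry with $|C_{ij}|>\mathrm{fp32max}/16$ into $F_{\mathrm{abs}}$ with no validation test, so a correct but very large value can sit in $F$. Restrict soundness, as the paper does, to positions accepted by the validation test in this or an earlier round.

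The closing sentence about $r\ge 2.5B$ is, in the paper as in your plan, an empirical remark. It rests on the measured $95$th percentile of index error and is not proved.
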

\begin{proof}
The search reaches $(i,j)$, and validation accepts it because $|E_{ij}|$
exceeds the rounding bound. Lemma~\ref{lem:validation} excludes every clean
position visited.
\end{proof}

\begin{theorem}[Recovery over rounds]
\label{thm:rounds}
Under Lemma~\ref{lem:isolation} and the index condition of
Theorem~\ref{thm:float}, $k$ independent rounds recover all $s$ errors with
probability at least $1 - s\,c^{-k}$.
\end{theorem}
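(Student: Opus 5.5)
The plan is to fix one error $(i,j)$, bound the probability that it is missed in every one of the $k$ rounds, and finish with a union bound over the $s$ errors. In a single round $t$, Lemma~\ref{lem:isolation} gives isolation of $(i,j)$ with probability at least $1-1/c$. On that event the index condition of Theorem~\ref{thm:float} holds by hypothesis: the bucket clears $\max(\tau_c,\tau_d)$, lies among the $K$ largest, and the decoded index falls within radius $r$. Theorem~\ref{thm:float} then says Algorithm~\ref{alg:localize} returns $(i,j,\delta)$ with $|\delta-E_{ij}|\le\tau_{ij}$. So the probability that round $t$ fails to recover $(i,j)$ is at most $1/c$.

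Next I combine rounds. Each round in Algorithm~\ref{alg:pipeline} draws fresh hashes $(h_1,v_1,h_2,v_2)$ independently of earlier rounds. The event ``$(i,j)$ is not isolated in round $t$'' depends only on the round-$t$ hashes, because the support of $E$ is fixed. The failure events across rounds are therefore independent, and I would bound
\[
\Pr[(i,j)\text{ missed in all }k\text{ rounds}]\le c^{-k}.
\]
A union bound over the $s$ errors gives failure probability at most $s\,c^{-k}$, which is the claim.

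The main obstacle is peeling. After earlier rounds, the sketches of round $t$ have confirmed faults subtracted, and the set $F$ determines which positions are skipped. Recovery in round $t$ is then not a function of the round-$t$ hashes alone.

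I would handle peeling by monotonicity rather than by tracking it exactly. Suppose $(i,j)$ is already in $F$ before round $t$. Then it has already been recovered, and by Lemma~\ref{lem:validation} every entry of $F$ is a genuine fault, so that outcome only helps. Suppose instead $(i,j)\notin F$. Subtracting validated faults, each within $\tau_{ij}$ of its true error, cannot turn an isolated bucket into a non-isolated one: other errors in that bucket are only removed. Isolation computed on the unpeeled sketch therefore still implies the hypotheses of Theorem~\ref{thm:float}. Here I would take the index and threshold conditions as holding on the peeled sketch, reading them as part of the assumed ``index condition.''

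The failure event for round $t$ is then contained in the non-isolation event of the fresh hashes, which is independent of everything before round $t$. This containment justifies the product bound $c^{-k}$. The remaining subtlety is removal of a bucket from $U$ after a neighbour's acceptance. Because $(i,j)$ is isolated, its bucket is the only one whose neighbourhood search targets it, and acceptance of another entry from that bucket would require that entry to be corrupted, contradicting isolation. I expect this last point to need a sentence of care, since a nearby corrupted entry in a different bucket could also be accepted from this bucket's search.
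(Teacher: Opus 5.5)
Your first two paragraphs are the paper's proof. By Lemma~\ref{lem:isolation} an error fails to be isolated in a round with probability at most $1/c$. Fresh hashes make these events independent, so the error is missed in all $k$ rounds with probability at most $c^{-k}$, and a union bound over the $s$ errors finishes. Like the paper, you get recovery on isolation by citing Theorem~\ref{thm:float}.

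The paper stops there, and its Remark says that a proof covering peeling (the rounding residual of the subtracted $\delta$'s and the cap $K$) is open. Your monotonicity paragraph does not close that case; it relocates it. When $(a,b)$ is isolated for $(i,j)\notin F$, peeling subtracts nothing at $(a,b)$. Validated faults are corrupted entries (Lemma~\ref{lem:validation}), and by isolation all of them hash elsewhere. So $S_{ab}$, $R_{ab}$, $T_{ab}$ and the decoded index are untouched. What peeling can move is $\hat\sigma_{\mathrm{MAD}}$, hence $\tau_c$ and $\tau_d$, and the top-$K$ ranking, and those you simply assume on the peeled sketch. That is a defensible reading of the hypotheses, but it is an assumption, not a proof of what the paper leaves open.

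Your last paragraph contains a false step, as your closing sentence suspects. Isolation concerns hash buckets, not the spatial search window. A corrupted $(i',j')\notin F$ hashed to another bucket can lie within Chebyshev distance $r$ of $(\hat\imath_{ab},\hat\jmath_{ab})$ and come before $(i,j)$ in the Chebyshev-then-Manhattan order of Algorithm~\ref{alg:localize}. Validation then accepts it, correctly, and $(a,b)$ leaves $U$. Nothing forces any bucket to visit $(i,j)$ later in that round, and no contradiction with isolation arises. So your own argument does not derive recovery from isolation plus the index condition. You have two options. Keep Theorem~\ref{thm:float} as a black box, as the paper does; its one-line proof, ``the search reaches $(i,j)$'', passes over exactly this case. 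Or add a separation hypothesis, e.g.\ that no fault outside $F$ other than $(i,j)$ lies within Chebyshev distance $2r$ of $(i,j)$. Under that hypothesis $(a,b)$ can accept nothing but $(i,j)$, since clean entries are rejected and entries of $F$ are skipped, and your containment of the round-$t$ failure event in the non-isolation event goes through.
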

\begin{proof}
An error is missed in all rounds with probability at most $c^{-k}$, and a union
bound over the $s$ errors gives the result.
\end{proof}

\begin{remark}[Peeling]
Peeling across rounds subtracts only validated faults, so a bucket isolated for
an error in the independent process stays isolated, up to the rounding residual
of each subtracted $\delta$ and the candidate cap $K$. A proof that covers both
effects is open. We use Theorem~\ref{thm:rounds} as the bound, and the
evaluation measures the implemented decoder.
\end{remark}

\begin{theorem}[Cost]
\label{thm:cost}
With the sketch of the $B$ side cached, the probe performs
$O(n_1n_2 + n_1n_3 + m^2n_2)$ operations, and localization performs
$O\big(k(n_1n_2 + n_1n_3 + m_{\mathrm{loc}}^2n_2) + kK(2r{+}1)^2n_2\big)$,
against $O(n_1n_2n_3)$ for recomputing $AB$.
\end{theorem}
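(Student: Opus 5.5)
The plan is to count operations stage by stage, following the data flow of Algorithms~\ref{alg:localize} and~\ref{alg:pipeline}. Each stage is charged to one of the terms in the statement.

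\emph{The probe.}
\begin{itemize}
\item The scatter $H_1A$ touches each entry of $A$ once, with one signed add into row $h_1(i)$. It costs $O(n_1n_2)$ and produces an $m\times n_2$ matrix.
\item The scatter $H_1CH_2^\top$ folds each entry of $C$ once into bucket $(h_1(i),h_2(j))$. It costs $O(n_1n_3)$.
\item The cached factor $BH_2^\top$ contributes nothing per call, since it was formed once. Its one-time cost is $O(n_2n_3)$, which could also be charged to the first call.
\item The product $(H_1A)(BH_2^\top)$ of an $m\times n_2$ matrix with an $n_2\times m$ matrix costs $O(m^2n_2)$.
\item The subtraction, the clipped mean for $\hat\sigma$, the threshold, the maximum and the finiteness test each take a constant number of passes over $m^2$ entries. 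This is $O(m^2)\subseteq O(m^2n_2)$.
\end{itemize}
Summing gives $O(n_1n_2+n_1n_3+m^2n_2)$.

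\emph{Localization, one round.} The round repeats the probe construction for $S$, $R$ and $T$ at size $m_{\mathrm{loc}}$. The weights $w_i$ and $w_j$ multiply each scattered entry by a precomputed scalar, so each of the constantly many scatters and products keeps its cost: $O(n_1n_2+n_1n_3+m_{\mathrm{loc}}^2n_2)$.

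One point needs care. Here the $B$ side with column weights may not be cached, and forming $BH_2^\top$ and its weighted version costs $O(n_2n_3)$. I would either note that these factors are also cached per round, or absorb the term under $n_2n_3\le$ the stated terms for the shapes considered. The cleaner fix is to assume, as the statement does, that the $B$-side sketches are cached for all rounds.

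The remaining steps of the round are:
\begin{itemize}
\item Peeling subtracts one constant-size update per confirmed fault. That is $O(|F|)$, with $|F|\le kK$ plus $K$ absolute entries, which is dominated by the validation term below.
\item The MAD estimate needs medians of $m_{\mathrm{loc}}^2$ values. Selection gives $O(m_{\mathrm{loc}}^2)$.
\item Choosing the $K$ largest buckets by partial selection is also $O(m_{\mathrm{loc}}^2)$, and the ratio decoding is $O(K)$.
\item The neighbourhood search visits at most $(2r{+}1)^2$ offsets for each of at most $K$ buckets in $U$. Each visit does one inner product of length $n_2$ plus the bound $\tau_{ij}$ of equal cost. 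This gives $O(K(2r{+}1)^2n_2)$.
\end{itemize}
Membership tests in $F$ can be made $O(1)$ with a hash set.

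\emph{Totals and the remaining work.} Multiplying the per-round cost by $k$ gives the stated localization bound. The pre-pass of Algorithm~\ref{alg:pipeline} also needs accounting:
\begin{itemize}
\item the nonfinite scan over $C$ costs $O(n_1n_3)$;
\item at most $K$ recomputations cost $O(Kn_2)$;
\item the strided rms estimate costs $O(n_1n_3)$;
\item the apply step costs $O(|F|)$;
\item the final re-probe costs the probe bound.
\end{itemize}
All of these fall inside the localization bound with $k\ge1$ and $r\ge0$.

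The comparison is with recomputing $AB$ directly, which takes $n_1n_3$ inner products of length $n_2$, i.e.\ $O(n_1n_2n_3)$. The expected obstacle is bookkeeping rather than mathematics. The main step to make explicit is that the caching hypothesis covers every weighted $B$-side factor and all $k$ rounds, so that no hidden $O(n_2n_3)$ or rebuild cost appears per call. The fallback recompute in the last line of Algorithm~\ref{alg:pipeline} is a separate branch, charged $O(n_1n_2n_3)$ only when taken.
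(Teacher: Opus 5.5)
Your stage-by-stage count is correct, and it is essentially the accounting the paper relies on. The paper states the theorem without a separate proof and reads the costs off \S\ref{sec:sketches}, \S\ref{sec:probe} and \S\ref{sec:radius}: scatters over $A$ and $C$, $m^2n_2$ products against cached $B$-side factors, and $(2r{+}1)^2$ inner products of length $n_2$ per candidate. You are also right that the caching hypothesis must extend to the weighted $B$-side factors of every fresh hash round, since otherwise an $O(k\,n_2n_3)$ term appears.

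One small tightening concerns the per-round peeling work. Bounding it by the validation term needs $k\lesssim(2r{+}1)^2n_2$. Bound it instead by $|F|\le n_1n_3$, which holds because the search skips positions already in $F$, so they are distinct; this lies inside $k\,n_1n_3$ unconditionally.
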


The repaired entry is a fresh FP32 inner product. It is accurate to the rounding
of that recomputation rather than equal to $(AB)_{ij}$.

\section{Implementation}
\label{sec:impl}

\sys{} is a PyTorch engine. The row and column scatters run as segmented
kernels without atomic operations, compiled at run time through NVRTC with
native BF16 and FP16 variants, so the $O(n^2)$ passes never materialize a
weighted temporary. The sketch of the $B$ side is cached and keyed on tensor
identity and version, so an optimizer update in place invalidates it.
Activations change on every call and are not cached. The probe can be replayed
from a CUDA graph. The graph writes its noise estimate to a static buffer that
recalibration reads, and it can accumulate the verdict into a flag on the
device, so a caller synchronizes once per step rather than once per GEMM.
\texttt{probe}, \texttt{localize} and \texttt{apply\_corrections} are separate
calls, and a narrowed $C$ is rejected rather than tolerated. TF32 must be
disabled for the contractions of the verifier itself, because enabling it
raised the measured sketch noise floor $419\times$. Code and raw logs will be released.

\section{Evaluation}
\label{sec:eval}

\subsection{Methodology}
\label{sec:setup}

All measurements use an NVIDIA H100 80\,GB (driver 580.95.05, CUDA 13.0,
PyTorch 2.13) with TF32 disabled. Operands are BF16 or FP16 as stated, and
products are accumulated and delivered at FP32. \emph{Output side} faults flip
IEEE 754 bit 26 ($256\times$) of an entry of the finished $C$.
\emph{Instruction level} faults use NVBit~\cite{nvbit2019}. A tool instruments
every \texttt{HMMA} in a rectangular BF16 WMMA GEMM with $16^3$ tiles and
corrupts a live FP32 accumulator register partway through the multiply. A flip
of bit 26 on an accumulator holding $16$ produced exactly $4096$. We use a
custom NVBit tool rather than NVBitFI~\cite{tsai2021nvbitfi}, which was released
for NVBit 1.5.5 and CUDA 11.2 before the Hopper architecture, so that we can
target the FP32 accumulator of \texttt{HMMA} instructions with up to four
simultaneous sites, arbitrary bit masks, and stuck at modes. Ground truth comes
from differencing against an uninstrumented run, not from the report of the
injector. A trial whose difference is nonfinite has no defined ground truth and
is not scored. Recovery is the fraction of injected faults located and
repaired, reported with $95\%$ Wilson intervals.

Every result uses the calibrated engine of \S\ref{sec:radius} with $k=3$,
$\rho_{\min}=0.02$, $r^\star = 4$, $r_{\max}=16$, and
$\tau_c = \min(\tau, 2\hat\sigma_{\mathrm{MAD}})$, unless stated otherwise. The
only exception is the fitted baseline of Table~\ref{tab:rect}, which is
described with it. Figure~\ref{fig:cost} and the timings of the dirty path in
\S\ref{sec:rq4} were measured with the GPU to themselves. The timings reported
with Tables~\ref{tab:realmodel} and~\ref{tab:baselines} were measured while
NVBit campaigns shared the same GPU.

\subsection{RQ1: Accuracy on Deployable Shapes}
\label{sec:rq1}

\begin{table}[t]
\centering
\caption{Recovery on transformer layer shapes, one output side fault per trial,
         60 trials per shape and format. \emph{Fitted}: BF16, bucket count from
         a rule fitted as $n^{3/2}$, localization at that $m$ with exact
         rounding and no measured plan. \emph{Law}: the calibrated engine, with
         identical recovery in BF16 and FP16. $m$ is the probe's bucket count.
         FPR is $0$ in every cell.}
\label{tab:rect}
\footnotesize
\setlength{\tabcolsep}{2.4pt}
\begin{tabular}{@{}l r r r r r@{}}
\toprule
 & \multicolumn{2}{c}{Fitted} & \multicolumn{3}{c}{Law} \\
$n_1{\times}n_2{\times}n_3$ & $m$ & Rec. & $m$ BF16 & $m$ FP16 & Rec. \\
\midrule
$4096{\times}4096{\times}4096$    & 48  & 0.750 & 48  & 68   & 1.000 \\
$8192{\times}4096{\times}4096$    & 48  & 0.367 & 110 & 193  & 1.000 \\
$8192{\times}4096{\times}14336$   & 105 & 0.183 & 358 & 630  & 1.000 \\
$8192{\times}14336{\times}4096$   & 48  & 0.383 & 224 & 393  & 1.000 \\
$16384{\times}8192{\times}8192$   & 128 & 0.383 & 649 & 1142 & 1.000 \\
$32768{\times}4096{\times}4096$   & 363 & 0.333 & 874 & 1538 & 1.000 \\
$4096{\times}4096{\times}32768$   & 363 & 0.400 & 874 & 1538 & 1.000 \\
$4096{\times}32768{\times}4096$   & 48  & 0.417 & 127 & 223  & 1.000 \\
\midrule
\textbf{Pooled} & & \textbf{0.402} & & & \textbf{1.000} \\
 & & {\footnotesize [0.359, 0.447]} & & & {\footnotesize [0.992, 1.000]} \\
\bottomrule
\end{tabular}
\end{table}

Table~\ref{tab:rect} shows that sizing by measured noise is necessary at
deployable shapes. The fitted rule recovers $193/480$, and none of its $287$
misses is a detection failure. Every one was detected and then mislocalized,
which is the $\Theta(n)$ failure of the index budget described in
\S\ref{sec:noise}. The worst shape is $8192{\times}4096{\times}14336$ at
$0.183$. The calibrated engine recovers all $480$ faults in each format, so
$960$ of $960$ in total, with no false positives. The law also exposes a scaling
limit. At $65536^3$ it asks for $m = 48010$, which would need $25.8$\,GB for
$S$, $R$ and $T$, and the population cap stops $m$ at $16384$
(\S\ref{sec:discussion}).

\subsection{RQ2: Hardware Faults and Fault Magnitude}
\label{sec:rq2}

Instruction level injection produces a fault population that output side
injection cannot. NVBit corrupts the accumulator at an arbitrary point of the
$n_2$ loop, so an early hit perturbs a partial sum that is still small. The
resulting $|E_{ij}|$ span two orders of magnitude, and the smallest are $2$ to
$9\%$ of a typical entry. A value written into a finished $C$ is always $O(1)$
relative to that entry.

\begin{figure}[t]
\centering
\begin{tikzpicture}
\begin{axis}[
  width=\columnwidth, height=4.3cm,
  xmode=log, x dir=reverse,
  xtick={1,0.2,0.05,0.02}, xticklabels={1.00,0.20,0.05,0.02},
  xlabel={$\rho_{\min}$ (smallest fault localized, $\times\mathrm{rms}(C)$)},
  ylabel={Recovery},
  ymin=0.35, ymax=1.03,
  ytick={0.4,0.5,0.6,0.7,0.8,0.9,1.0},
  tick label style={font=\footnotesize}, label style={font=\footnotesize},
  axis line style={inkMuted}, tick style={inkMuted},
  ymajorgrids, grid style={gridLine},
  legend style={font=\footnotesize, draw=none, fill=none},
  legend pos=south west,
  clip=false]
\addplot[color=seriesA, line width=1pt, mark=*, mark size=2pt,
  error bars/.cd, y dir=both, y explicit, error bar style={seriesA, line width=0.6pt}]
  coordinates {
    (1.00,0.550) += (0,0.104) -= (0,0.109)
    (0.20,0.887) += (0,0.053) -= (0,0.087)
    (0.05,0.975) += (0,0.018) -= (0,0.062)
    (0.02,1.000) += (0,0.000) -= (0,0.046)
  };
\addplot[color=seriesB, line width=1pt, mark=square*, mark size=2pt,
  error bars/.cd, y dir=both, y explicit, error bar style={seriesB, line width=0.6pt}]
  coordinates {
    (1.08,0.487) += (0,0.108) -= (0,0.106)
    (0.216,0.812) += (0,0.071) -= (0,0.099)
    (0.054,0.938) += (0,0.035) -= (0,0.076)
    (0.0216,0.963) += (0,0.024) -= (0,0.068)
  };
\legend{BF16, FP16}
\end{axis}
\end{tikzpicture}
\caption{Recovery under NVBit accumulator injection with the calibrated engine,
$4096{\times}2048{\times}4096$, $80$ faults per point ($40$ trials, two sites
each), Wilson $95\%$ intervals. FP16 points are offset slightly along the
axis for legibility. There were no detection failures and no false positives.
Every miss was a localization failure except one FP16 fault at
$\rho_{\min}=1.00$, which lay below its rounding bound. The same engine
recovers every output side fault (Table~\ref{tab:rect}).}
\label{fig:rho}
\end{figure}

Figure~\ref{fig:rho} shows what follows. Declaring $\rho_{\min} = 1$, which
sizes the plan for faults of typical magnitude, gives a recovery of $0.550$
with BF16 operands and $0.487$ with FP16, although the same engine recovers
every output side fault in Table~\ref{tab:rect}. The plan sets the bucket
count, the radius and the floor $\tau_d$ for that magnitude, and the smaller
accumulator faults do not fit it. Declaring the measured magnitude,
$\rho_{\min} = 0.02$, raises recovery to $1.000$ and $0.963$. Recovery figures
obtained by writing into a completed product are therefore upper bounds for a
localizer. Because $\rho_{\min}$ was chosen with this campaign in mind, the held
out campaigns on fresh fault sites are the ones that validate it. They recover
$78$ of $80$ faults with BF16 operands ($0.975$, $[0.913, 0.993]$) and $80$ of
$80$ with FP16 ($1.000$, $[0.954, 1.000]$), with no false positives, and every
one of the $80$ products was delivered correct after repair and the second
probe.

\begin{table}[t]
\centering
\caption{Recovery under NVBit injection by fault pattern,
         $4096{\times}2048{\times}4096$, two sites per trial, $\rho_{\min}=0.02$.
         \emph{Above}: BF16 recovery among faults whose error exceeds the
         rounding bound $\tau_{ij}$ of the entry. Below that bound no verifier
         can separate a fault from rounding. $38$ to $40$ scored trials per
         cell. \emph{Stuck at 1}: one BF16 trial of $40$ and no FP16 trial
         produced an observable corruption.}
\label{tab:faultmodels}
\footnotesize
\setlength{\tabcolsep}{3pt}
\begin{tabular}{@{}l c c c c@{}}
\toprule
Fault pattern & BF16 & Above & FP16 & FPR \\
\midrule
1 bit, bits 26 and 27                  & 0.975  & 0.975 & 0.975  & 0 \\
1 bit, anywhere                        & 0.423  & 0.767 & 0.382  & 0 \\
2 random bits                          & 0.671  & 0.841 & 0.662  & 0 \\
3 random bits                          & 0.775  & 0.849 & 0.787  & 0 \\
4 random bits                          & 0.936  & 0.948 & 0.962  & 0 \\
2 adjacent bits~\cite{tsai2021nvbitfi} & 0.440  & 0.786 & 0.372  & 0 \\
3 adjacent bits                        & 0.380  & 0.682 & 0.449  & 0 \\
2 exponent bits                        & 0.987  & 0.987 & 0.987  & 0 \\
Stuck at 0                             & 1.000  & 1.000 & 1.000  & 0 \\
Stuck at 1                             & 1/1    & 1/1   & silent & 0 \\
Whole word replacement                 & 0.988  & 0.988 & 1.000  & 0 \\
\bottomrule
\end{tabular}
\end{table}

\emph{Upsets of several bits.} Repair does not depend on how many bits flipped.
A localized entry is replaced by a fresh recomputation, so a two bit, stuck at
or whole word corruption is repaired exactly as a single flip is. The number of
flipped bits matters only through the magnitude of the error it produces, and
each regime is handled by a different mechanism. Flips confined to low mantissa
bits can leave an error below the rounding bound of the entry, where no
verifier can separate the fault from rounding, and Table~\ref{tab:faultmodels}
reports recovery above the bound separately so that these are not counted as
losses. Flips that reach high exponent bits produce infinite, NaN or out of
range values, which the exact scan of \S\ref{sec:nonfinite} recovers without a
sketch. Everything in between is localized by Algorithm~\ref{alg:localize},
subject to the index budget of \S\ref{sec:radius}.

Table~\ref{tab:faultmodels} tests one to four random bits anywhere in the word,
adjacent bursts including the double bit model of
NVBitFI~\cite{tsai2021nvbitfi}, two exponent bits, stuck at lines and whole
word replacement under instruction level injection. Faults above the rounding
bound are recovered at $0.682$ to $1.000$ with BF16 operands and $0.690$ to
$1.000$ with FP16, and no pattern produced a false positive. Recovery rises
with the number of random bits flipped, from $0.423$ for one bit anywhere to
$0.936$ for four, because more flipped bits yield larger errors. Adjacent
bursts are the hardest pattern, at $0.440$ and $0.380$, since a burst confined
to low mantissa bits often stays below the bound. Two bit exponent upsets are
recovered at $0.987$ in both formats. Stuck at 1 lines are almost always silent.
One BF16 trial in $40$ produced an observable, finite error, which was
recovered, and no FP16 trial produced one. For values of normal magnitude, the
exponent bits such a line pins are already set, which is the data dependence
that lets stuck at faults survive in the field.

\subsection{RQ3: Real Models}
\label{sec:rq3}

\begin{table}[t]
\centering
\caption{Perplexity on wikitext-2 ($4096$ tokens), $8$ output side faults on bit
         26 per guarded GEMM per call. Guarded: all $32$ Llama-2-7B
         \texttt{down\_proj} and all $36$ GPT-2 large \texttt{c\_proj} layers,
         FP32 delivery. Each model is hosted entirely in the stated operand
         format.}
\label{tab:realmodel}
\footnotesize
\setlength{\tabcolsep}{2.5pt}
\begin{tabular}{@{}l l r r r r r@{}}
\toprule
Model & Op. & Clean & Unguarded & Guarded & Recovered & FP \\
\midrule
Llama-2-7B  & BF16 & 7.4629  & 7.6965  & 7.4614  & 2047/2048 & 0 \\
Llama-2-7B  & FP16 & 7.4650  & 7.6132  & 7.4651  & 2048/2048 & 0 \\
GPT-2 large & BF16 & 26.2301 & 26.7666 & 26.2301 & 2303/2304 & 0 \\
GPT-2 large & FP16 & 26.2223 & 26.6885 & 26.2223 & 2304/2304 & 0 \\
\bottomrule
\end{tabular}
\end{table}

\begin{table}[t]
\centering
\caption{Recovery under NVBit accumulator injection on a real Llama-2-7B
         \texttt{down\_proj} layer ($4096{\times}11008{\times}4096$, activations
         from a wikitext-2 forward pass) and on synthetic operands of the same
         shape, calibrated engine at $\rho_{\min}=0.02$, two sites per trial,
         $40$ trials per cell, FPR $0$ in every cell. Products delivered correct
         after repair and the second probe: $157$ of $160$ on the real operands
         and $78$ of $80$ on synthetic ones.}
\label{tab:llamanvbit}
\footnotesize
\setlength{\tabcolsep}{3pt}
\begin{tabular}{@{}l c c@{}}
\toprule
Operands, fault & BF16 & FP16 \\
\midrule
Llama, 1 bit            & 0.988 & 0.988 \\
Llama, 2 exponent bits  & 1.000 & 0.975 \\
Synthetic, 1 bit        & 0.988 & 0.988 \\
\bottomrule
\end{tabular}
\end{table}

\begin{table}[t]
\centering
\caption{Training damage versus fault count, GPT-2 \texttt{h.5.mlp.c\_proj},
         BF16 operands, FP32 delivery, output side faults on bit 26, calibrated
         engine with a declared budget of $s = 16$. Damage is
         $\max|\text{loss} - \text{loss}_{\text{clean}}|$ after injection.}
\label{tab:dose}
\small
\begin{tabular}{@{}r r r r@{}}
\toprule
Faults & Repaired & Unguarded & Guarded \\
\midrule
16   & 16/16    & $1.93\times10^{-6}$ & $5.45\times10^{-7}$ \\
64   & 64/64    & $1.24\times10^{-5}$ & $1.09\times10^{-6}$ \\
256  & 256/256  & $1.45\times10^{-4}$ & $8.94\times10^{-7}$ \\
1024 & 379/1024 & $7.62\times10^{-2}$ & $7.44\times10^{-5}$ \\
4096 & 381/4096 & $3.50\times10^{-1}$ & $6.69\times10^{-4}$ \\
\bottomrule
\end{tabular}
\end{table}

Guarding every MLP down projection of Llama-2-7B (Table~\ref{tab:realmodel})
recovers $2047$ of $2048$ faults with BF16 operands and all $2048$ with FP16,
and every guarded call is flagged. The guard removes $99.4\%$ and $99.9\%$ of
the perplexity damage. GPT-2 large recovers $2303$ of $2304$ and $2304$ of
$2304$, and its guarded perplexity matches the clean one to four digits in both
formats. No run produced a false positive. With every one of the $256$ Llama
calls and $288$ GPT-2 large calls corrupted and localized, a guarded pass took
$48.2$\,s (BF16) and $41.5$\,s (FP16) for Llama-2-7B against $3.2$ to $3.5$\,s
clean, and $75.9$\,s and $25.8$\,s for GPT-2 large against $1.2$ to $1.8$\,s.

Table~\ref{tab:dose} shows a dose threshold in training. Sixteen faults leave
the loss essentially unchanged, and the damage becomes large by $1024$.
Repair is complete up to $256$ faults, which is $16\times$ the declared budget.
Beyond that, localization keeps at most $K = 128$ candidates in each of its
three rounds and repairs $379$ and $381$ faults. Even so, the guard lowers the
damage about $1000\times$ at $1024$ faults and $520\times$ at $4096$. Both
experiments use output side injection and, by \S\ref{sec:rq2}, bound recovery
from above.

Table~\ref{tab:llamanvbit} repeats the measurement under NVBit on the operands
of a real Llama-2-7B \texttt{down\_proj} layer, whose inner dimension of $11008$
raises sketch noise. At the default $\rho_{\min} = 0.02$, with no manual
setting, the calibrated engine recovers $0.975$ to $1.000$ on the real operands
and $0.988$ on synthetic operands of the same shape. For these products of the
tiled kernel, the measured noise sets $m_{\mathrm{loc}} = 1024$. The probe does
not slow down as $m$ adapts. Measured with the GPU to itself, it costs $0.62$ to
$0.69$\,ms on cuBLAS products and $0.93$ to $0.94$\,ms at the recalibrated
$m = 726$ on products of the tiled kernel, beside a GEMM of $0.46$ to
$0.47$\,ms, because reading $A$ and $C$ dominates the probe at this shape.

\subsection{RQ4: Cost}
\label{sec:rq4}

\begin{figure}[t]
\centering
\begin{tikzpicture}
\begin{axis}[
  width=\columnwidth, height=4.3cm,
  ybar=0pt, bar width=7pt,
  ymode=log, log origin=infty, ymin=0.1, ymax=20,
  ylabel={Milliseconds (log)},
  symbolic x coords={qkv,ffnup,attn,square},
  xtick=data,
  xticklabels={qkv,ffn up,attn 70B,square},
  xticklabel style={font=\footnotesize},
  tick label style={font=\footnotesize}, label style={font=\footnotesize},
  axis line style={inkMuted}, tick style={inkMuted},
  ymajorgrids, grid style={gridLine},
  enlarge x limits=0.18,
  legend style={font=\footnotesize, draw=none, fill=none},
  legend columns=2, legend pos=north west]
\addplot[fill=seriesB, draw=white, line width=1pt] coordinates
  {(qkv,0.346) (ffnup,1.200) (attn,2.844) (square,12.474)};
\addplot[fill=seriesA, draw=white, line width=1pt] coordinates
  {(qkv,0.777) (ffnup,0.993) (attn,1.240) (square,3.062)};
\legend{BF16 GEMM, probe}
\end{axis}
\end{tikzpicture}
\caption{Deployable probe cost against the guarded GEMM, BF16 operands, FP32
delivery, with activations rotating so the sketch of the activation side cannot
be reused. Shapes: qkv $8192{\times}4096{\times}4096$, ffn up
$8192{\times}4096{\times}14336$, attn 70B $16384{\times}8192{\times}8192$,
square $16384^3$. FP16: probe $0.780$ to $6.675$\,ms against GEMMs of
$0.360$ to $12.736$\,ms.}
\label{fig:cost}
\end{figure}

\emph{Clean path.} The $S$ only probe (\S\ref{sec:probe}) is the only stage paid
on every call. Figure~\ref{fig:cost} measures it with activations rotating,
which is the only setting that corresponds to a forward pass. The probe takes
$0.777$ to $3.062$\,ms against GEMMs of $0.346$ to $12.474$\,ms, that is,
$25$ to $225\%$ of the GEMM and $8.1$ to $21.0\times$ the bandwidth floor
$(\pi/\beta)(1/n_3 + 2/n_2)$ for reading $A$ once and $C$ once. The term
$1/n_3 + 2/n_2$ governs the ratio, so shallow projections are expensive and a
decision per layer follows directly. Segmented scatters and replay from a CUDA
graph with a deferred verdict account for most of the reduction from the
unoptimized path. The remaining gap to the floor is launch and synchronization
overhead rather than bandwidth. With FP16 operands, recalibration moved the
probe at $16384{\times}8192{\times}8192$ from the law's $m = 1142$ to $m = 656$,
where it costs $1.259$\,ms. The FP16 square shape ran at the law's $m = 2397$.

\emph{Dirty path.} Localization time follows the search radius, as
\S\ref{sec:radius} describes. At the Llama-2-7B \texttt{down\_proj} shape, the
calibrated engine localizes in $8.5$ to $9.0$\,ms on cuBLAS products
($m_{\mathrm{loc}} = 448$ to $464$, $r = 4$) and in $13.6$\,ms on products of
the tiled kernel ($m_{\mathrm{loc}} = 1024$, $r = 15$), against a GEMM of $0.46$
to $0.47$\,ms. The floor $\tau_d$ accounts for the second case. At almost the
same radius, localizing those products took $1162$ to $1176$\,ms before it was
added, because clean buckets that cleared $\tau_c$ walked the whole search. At
$16384{\times}8192{\times}8192$, localization takes $42$ to $43$\,ms
($m_{\mathrm{loc}} = 2896$, $r = 8$ to $9$) against a GEMM of $2.86$ to
$2.96$\,ms, down from $1179$ to $1191$\,ms at $r = 16$. A deployment that wants
only a correct product should still recompute on detection. Localization earns
its place through telemetry (\S\ref{sec:rq6}), not through repair economics.

\subsection{RQ5: Comparison with Baselines}
\label{sec:rq5}

\begin{table}[t]
\centering
\caption{Baselines on identical BF16 operands and identical faults, with the
         same analytic threshold form. Det.: fraction of trials detected.
         $s{=}1$: recovery with one output side fault at qkv, $20$ trials.
         $s{=}2$: recovery with two NVBit accumulator faults at
         $4096{\times}2048{\times}4096$, $40$ trials. FP: false positives over
         both workloads. ms: median cost per call at qkv to detect and
         localize, measured while NVBit campaigns shared the GPU.}
\label{tab:baselines}
\footnotesize
\setlength{\tabcolsep}{3pt}
\begin{tabular}{@{}l c c c c c@{}}
\toprule
Method & Det. & $s{=}1$ & $s{=}2$ & FP & ms \\
\midrule
\sys{}            & 1.000 & 1.000 & 1.000 & 0   & 9.884 \\
Checksum ABFT     & 1.000 & 1.000 & 0.025 & 0   & 0.462 \\
Weighted checksum & 1.000 & 1.000 & 0.000 & 2   & 0.697 \\
Freivalds         & 1.000 & n/a   & n/a   & n/a & 0.253 \\
Recompute         & 1.000 & 1.000 & 0.975 & 0   & 5.824 \\
\bottomrule
\end{tabular}
\end{table}

Table~\ref{tab:baselines} compares \sys{} with checksum ABFT, weighted checksum
ABFT, a Freivalds probe, and recompute and compare on the same half precision
operands and the same corrupted products. All five detect every corrupted
product, and all except Freivalds localize a single fault. Two faults per
product separate them. Checksum ABFT localizes $2$ of $80$, weighted checksum
ABFT localizes none and reports $2$ false positives, recompute and compare
localizes $78$, and \sys{} localizes all $80$. On a corrupted call at qkv,
\sys{} took $9.9$\,ms to detect and localize, against $5.8$\,ms for a
recompute measured under the same shared conditions. Row and column checksums flag two rows
and two columns for two errors, which leaves four candidate positions, and a
weighted syndrome averages two indices. Both therefore localize only single
errors by construction. FT-GEMM is not measured. It is an FP32$\times$FP32
kernel of its own rather than a wrapper around the vendor GEMM, so it cannot
run this workload, and its overhead would be relative to a GEMM roughly an
order of magnitude slower than the tensor core path. Table~\ref{tab:related}
compares it qualitatively.

\subsection{RQ6: What Localization Adds}
\label{sec:rq6}

A recompute yields a correct product and a checksum yields one bit. Neither
tells an operator which device to pull. Each localized fault carries its
coordinate, its magnitude, and its direction, and aggregated records support
three diagnoses that need progressively more data. The first is a
\emph{repeated coordinate} across independent products. Under a transient
model, a coincidence among $F$ faults on $n_1n_3$ entries has probability
$\approx 1-e^{-F(F-1)/2n_1n_3}$, about $1.5\times10^{-4}$ for $F = 18$ on an
output of $2^{20}$ entries. The second is \emph{concentration within a tile}
of $(i \bmod t_1, j \bmod t_3)$, which needs roughly $640$ faults for a $\chi^2$
test over a $16\times8$ tile. The third is a device whose \emph{rate} is an
outlier against the fleet.

The flipped bit position is \emph{not} recoverable, and the reason is
structural. Inferring it from $C_{ij}$ and $\gamma$ identifies the bit in
$400/400$ faults written into a finished product and in $0/80$ under NVBit. An
upset in the middle of accumulation is followed by the remaining partial
products, so the delivered entry is not the clean value with one bit inverted.
We therefore report coordinate, magnitude and direction only. Because every
candidate is validated, a degraded localizer yields \emph{incomplete} telemetry
rather than wrong telemetry, which is another reason to verify at accumulation
precision.

\section{Discussion and Limitations}
\label{sec:discussion}

\emph{Trust.} The guarantee against false positives assumes a recomputation that
the fault cannot reach (\S\ref{sec:threat}), so a permanent fault on the
verifying device is not covered. \emph{Scaling.} Equation~\eqref{eq:mnum} makes
$m$ grow as $n^{2.57}$ for square products. At $65536^3$ the law asks for
$25.8$\,GB of sketches, and the population cap stops $m$ at $16384$, which
bounds the method before it bounds accuracy. Transformer projections are
rectangular and far below that size (Table~\ref{tab:rect}).
\emph{Portability.} The constants in~\eqref{eq:sigma} are specific to the BLAS,
the device and the operand format, and \sys{} measures the noise at run time
rather than relying on them (\S\ref{sec:radius}). \emph{Economics.} The probe is
the steady state cost and is expensive on shallow projections. Localization
costs $15$ to $30$ GEMMs at the shapes measured, so repair still costs more than
a recompute and localization is justified by the diagnostic record.
\emph{Coverage.} Operand corruption, faults outside matrix products, and fused
kernels that never materialize their products are out of reach of any verifier
that runs after the fact. \emph{Evaluation.} Real model perplexity uses output
side injection. Instruction level injection on the operands of a real
Llama-2-7B layer recovers $0.988$ of single bit faults with the calibrated
engine, the same as on synthetic operands of that shape.

\section{Conclusion}
\label{sec:conclusion}

\sys{} verifies an unmodified mixed precision GEMM after the fact, at the
precision of its accumulator, and localizes several corrupted entries with no
false positives by construction. Making hashed moment localization work in
floating point required a measured noise law whose constant depends on the
library, and deploying it required measuring that noise at run time. Evaluating
it honestly required instruction level injection, because output side injection
cannot produce the small faults that a live accumulator yields and so overstates
what a localizer recovers. Localization is not the cheapest way to obtain a
correct product, but of the three responses it is the only one that says where
the corruption was.

\bibliographystyle{IEEEtran}
\bibliography{references}

\end{document}